\newcommand{\matr}[2]{\left[\begin{array}{@{}#1@{}}#2\end{array}\right]}
\newcommand{\E}[2]{\mathbb{E}_{#1}\left[#2\right]}
\newtheorem{Theorem}{Theorem}
\newtheorem{Lemma}[Theorem]{Lemma}
\newtheorem{Corollary}[Theorem]{Corollary}
\newtheorem{Remark}{Remark}
\definecolor{wheat}{rgb}{0.96,0.87,0.70}
\definecolor{mario}{rgb}{0.8,0.8,1}
\definecolor{seb}{rgb}{0.8,1,0.8}
\title{\LARGE \bf
	Practical Reinforcement Learning of Stabilizing Economic MPC
}
\author{Mario Zanon$^{1}$, S\'ebastien Gros$^{2}$ and Alberto Bemporad$^{1}$
	\thanks{$^{1}$Mario Zanon and Alberto Bemporad are with the IMT School for Advanced Studies Lucca, Italy. 
		{\tt\small \{name.surname\}@imtlucca.it}}%
	\thanks{$^{2}$S\'ebastien Gros is with the Department of Electrical Engineering, Chalmers University of Technology, H\"{o}rsalsv\"{a}gen 9a, Sweden.
		{\tt\small grosse@chalmers.se}}%
}
\begin{document}

\maketitle
\thispagestyle{empty}
\pagestyle{empty}

\begin{abstract}                
	Reinforcement Learning (RL) has demonstrated a huge potential in learning optimal policies without any prior knowledge of the process to be controlled. 
	Model Predictive Control (MPC) is a popular control technique which is able to deal with nonlinear dynamics and state and input constraints. The main drawback of MPC is the need of identifying an accurate model, which in many cases cannot be easily obtained. Because of model inaccuracy, MPC can fail at delivering satisfactory closed-loop performance. Using RL to tune the MPC formulation or, conversely, using MPC as a function approximator in RL allows one to combine the advantages of the two techniques.
	This approach has important advantages, but it requires an adaptation of the existing algorithms. We therefore propose an improved RL algorithm for MPC and test it in simulations on a rather challenging example.
\end{abstract}


\section{Introduction}
Reinforcement learning (RL) is a \emph{model-free} control technique which recursively updates the controller parameters in order to achieve optimality. 
Once the controller parameters have been learned, the controller will implement the control action which yields an infinite-horizon optimal cost  for each initial condition~\cite{Sutton2018}. RL has drawn increasing attention thanks to the striking results obtained in beating chess and go masters~\cite{Silver2016}, and in learning how to make a robot walk or fly without supervision~\cite{Abbeel2007,Wang2012a}.

In order to be able to solve the RL problem in practice, function approximation strategies must be employed. Such approximations typically rely on a set of basis functions, or \emph{features}, and corresponding parameters that multiply them. In this case, the function approximator is linear in the parameters. However, nonlinear function approximators are commonly deployed using Deep Neural Networks (DNN).

Model Predictive Control (MPC) is a \emph{model-based} technique which exploits a model of the system dynamics to predict the system's future behavior and optimize a given performance index, possibly subject to input and state constraints~\cite{Rawlings2009b,Grune2011,Borrelli2017}. The success of MPC is due to its ability to enforce constraints and yield optimal trajectories. While a plethora of efficient algorithms for the online solution of MPC problems has been developed, the main drawback of this control technique is the need of identifying the open-loop model offline, which is typically the most time-consuming phase of control design.

The advantages of MPC and RL can be combined together by framing MPC as a function approximator within an RL context. The tuning parameters of the MPC problem (e.g., cost weighting matrices, model parameters, etc.) can be framed as parameters of a nonlinear function approximator, namely the MPC optimization problem. This setup can be adopted to approximate the feedback policy, the value function, or the action-value function.

The main advantages of combining MPC with RL are (a) the ease of introducing a constraint-enforcing policy within RL, (b) the possibility of improving existing models and controllers by using them as an initial guess in RL, and (c) the possibility of interpreting the learned algorithm in a model-based framework.

The combination of learning and control techniques has been proposed in, e.g.,~\cite{Koller2018,Aswani2013,Ostafew2016,Berkenkamp2017}. Some attempts at combining RL and the linear quadratic regulator have been presented in~\cite{Lewis2009,Lewis2012}. To the best of our knowledge, however, \cite{Gros2018} is the first work proposing to use NMPC as a function approximator in RL.

In this paper, building on the ideas from~\cite{Gros2018}, we further analyse the combination of RL and MPC. The main contribution of this paper is the development of an improved algorithmic framework tailored to the problem formulation which overcomes some shortcomings of basic RL algorithms. We demonstrate the potential of our approach in simulations on an involved nonlinear example from the process industry.


This paper is structured as follows. Section~\ref{sec:q_learning} introduces $Q$-learning. The main contributions of this paper are presented in Section~\ref{sec:mpc_rl}, which discusses the use of MPC as a function approximator in RL, and Section~\ref{sec:data_efficient}, which presents an algorithm adaptation, tailored to the problem. Some numerical examples are given in Section~\ref{sec:examples}. The paper is concluded by Section~\ref{sec:conclusions} which also outlines future research directions.

\section{Reinforcement Learning}
\label{sec:q_learning}


Consider a Markov Decision Process (MDP) with state transition dynamics $\mathbb{P}[s_+|s,a]$, where $s$ and $a$ denote the states and actions (or controls) respectively. We also introduce the stage cost $\ell(s,a)$ and the discount factor $0< \gamma \leq 1$. 
The  action-value function $Q_\star(s,a)$ and value function $V_\star(s)$ associated with the optimal policy $\pi_\star\left(s\right)$ are defined by the Bellman equations:
\begin{subequations}
	\label{eq:Bellman}
	\begin{align}
	Q_\star\left(s,a\right) &= \ell\left(s,a\right) + \gamma \mathbb{E}\left[V_\star(s_+)\,|\, s,a\right], \label{eq:Bellman1}\\
	V_\star\left(s\right) &= Q_\star\left(s,\pi_\star\left(s\right)\right) = \min_{a}\, Q_\star\left(s,a\right). \label{eq:Bellman2}
	\end{align}
\end{subequations}
$Q$-learning parametrizes the action value function as $Q_\theta(s,a)$, where $\theta$ is a vector of parameters whose values have to be learned, and aims at minimizing $\| Q_\star(s,a) - Q_\theta (s,a) \|_2^2$.
Given a state-action pair $(s,a)$ and the next state $s_+$, standard algorithms~\cite{Sutton2018} update the parameter using
\begin{subequations}
	\label{eq:Qlearning}
	\begin{align}
		\delta &= \ell(s,a) + \gamma\,\min_{a_{+}} Q_{\tilde \theta}(s_{+},a_{+}) - Q_\theta(s,a), \label{eq:TDError}\\
		\theta &\leftarrow \theta + \alpha\delta\nabla_\theta Q_\theta(s,a), \label{eq:Qlearning:Update}
	\end{align}
\end{subequations}
where $\delta$ is known as the \emph{Temporal-Difference (TD) error}. In batch policy updates $\tilde \theta$ is kept constant for $N_\mathrm{upd}$ steps after which it is updated as $\tilde \theta\leftarrow\theta$. In instantaneous policy updates $N_\mathrm{upd}=1$ such that $\tilde \theta$ is updated at every time instant.

It is important to underline that, while in this paper we focus on $Q$-learning (which has been successfully deployed on some applications, e.g.,~\cite{Watkins1989,Mnih2015,Theocharous2015}), our approach can be readily deployed on other TD-algorithms such as SARSA~\cite{Kober2013}. Even in approaches directly optimizing the policy, learning the action-value function is often necessary~\cite{Sutton1999,Silver2014}, such that the developments of this paper apply to a fairly broad class of RL algorithms.

While a very common choice is to use Deep Neural Networks (DNN) as function approximators, it is hard to analyze closed-loop stability in DNN-based RL. Moreover, in case a controller is already available, information on how to control the system cannot be easily incorporated in the problem formulation. For this reason, we propose to use MPC as a function approximator, since it makes it possible to enforce closed-loop stability guarantees and an existing controller can be used as initial guess for the algorithm. 



\section{MPC-Based RL}
\label{sec:mpc_rl}

The use of MPC to parametrize the action-value function has been first advocated in~\cite{Gros2018}. 
In the following, we first present the function approximation used and recall the most important result in Theorem~\ref{thm:wrong_model}, then we further analyze the properties of MPC-based RL and in the next section we will propose a new variant of the  $Q$-learning algorithm tailored to MPC-based RL.

\subsection{Parametrization of the Function Approximations}

We parametrize the action-value function using an MPC scheme of the form
\begin{subequations}
	\label{eq:param_nmpc}
	\begin{align}
	Q^N_\theta(s,a) =  \min_{z}\ & \lambda_\theta(x_0)+\gamma^N V^\mathrm{f}_\theta(x_N) + \sum_{k=0}^{N-1} \gamma^k  \ell_\theta(x_k,u_k) \nonumber \\ 
	& + \sum_{k=0}^{N}\gamma^k \left ( s_k^\top W_s s_k +  w_s^\top s_k\right ) \label{eq:param_nmpc:cost}\\
	\mathrm{s.t.} \ & x_0 = s, \quad u_0=a, \label{eq:param_nmpc:initial}\\
	&x_{k+1} = f_\theta\left(x_k,u_k\right), \label{eq:param_nmpc:dynamics}\\
	& g_\theta\left(u_k\right) \leq 0, \label{eq:param_nmpc:input_const} \\
	& h_\theta\left(x_k,u_k\right) \leq s_k,\quad h^\mathrm{f}_\theta(x_N) \leq s_N, \label{eq:param_nmpc:path} 
	\end{align}
\end{subequations}
where $z=(x_0,u_0,s_0,\ldots,x_N)$. Consequently, we obtain
\begin{align*}
\pi_\theta^N(s) = \mathrm{arg}\min_a\, Q_\theta^N(s,a), \qquad V_\theta^N(s) = \min_a\, Q_\theta^N(s,a).
\end{align*}
Note that the policy $\pi_\theta^N(s)$ and value function $V_\theta^N(s)$ are equivalently obtained by solving Problem~\eqref{eq:param_nmpc} with constraint $u_0=a$ removed. 
In order to address feasibility issues in Problem~\eqref{eq:param_nmpc}, in the presence of state-dependent constraints we adopt an exact relaxation of such constraints~\cite{Scokaert1999a}. 

\begin{Remark}
	MPC formulations in which the stage cost is not positive-definite are commonly referred to as \emph{Economic MPC}.
	In order to enforce closed-loop stability and the existence of the MPC solution, we assume that $\ell_\theta$ and $V^\mathrm{f}_\theta$ are positive definite functions. Therefore, in order to deal with the situation in which the true stage cost $\ell$ is not positive-definite, we have introduced the arrival penalty $\lambda_\theta$.
	For all details on this topic, we refer to~\cite{Gros2018} and references therein.
%
\end{Remark}
\begin{Remark}
	The proposed setup readily accommodates for formulations in which the cost penalizes deviations from a given reference. In that case, the stage cost (both in RL and MPC) depends on a reference, passed to the problem as an exogenous signal. Input-output model formulations also readily fit in the proposed framework.
\end{Remark}

Among the desirable properties of the proposed formulation, we mention nominal stability guarantees, and the possibility to introduce constraints accounting for, e.g., actuator limitations, safe operation of the system, etc. While being aware that the current framework is not able to fully exploit these advantages, with the presented developments we aim at constructing a sound basis which will be used in future research with the intent of developing self-tuning, safe, and stable economic MPC controllers.

\subsection{Learning the Model: RL and System Identification}

We recall the following fundamental theorem from~\cite{Gros2018}, which states that the optimal value and action-value functions as well as the optimal policy can be learned even by using MPC based on a state transition model $\hat f$ which is different from the true model $f$. This also entails that there is no guarantee (and no need) that the RL algorithm will learn a physically meaningful model.
\begin{Theorem}[\cite{Gros2018}]
	\label{thm:wrong_model}
	Consider a given (possibly stochastic) state transition model $\hat s_{k+1}=\hat f(\hat s_k,\hat a_k)$, possibly different from the true model $f(s_k,a_k)$. 
	Define the optimal value function 
	\begin{align*}
		\hat V^N(s) =\min_{\pi}\E{}{\gamma^N V_\star(\hat s^{ \hat \pi^N}_N) + \sum_{k=0}^{N-1}\, \gamma^k \hat \ell( \hat s^{ \hat \pi^N}_k, \hat \pi( \hat s^{ \hat \pi^N}_k))} 
	\end{align*}	
	associated with stage cost $\hat \ell$, and the terminal cost $V_\star(s)$ over an optimization horizon $N$. Define $\hat s^{\hat \pi}_{0,\ldots,N}$ as the (possibly stochastic) trajectories of the state transition model $\hat f$ under a policy $\hat \pi$, starting from $\hat s^{\hat \pi}_0 = s$, and $\hat \pi^N$ the optimal policy associated with $\hat V_N(s)$ and $\hat Q^N(s,a)$ the associated action-value function. Consider the set $\mathcal{S}$ such that
	\begin{align*}
	\left|\,\mathbb{E}\left[V_\star\left(\hat s^{\pi_\star}_{k}\right)\right]\,\right| < \infty, \qquad \forall \, s\in\mathcal{S}. 
	\end{align*}
	Then, $ \exists \ \hat \ell$ such that the following identities hold on $\mathcal{S}$:
	\begin{itemize}
		\item[(i)] 
		$\hat V^N(s) = V_\star(s)$ 
		\item[(ii)] $
		\hat \pi^N \left(s\right) = \pi_\star\left(s\right)$
		\item[(iii)] $
		\hat Q_N\left(s,a\right) = Q_\star\left(s,a\right)$ for the inputs $a$ such that $\left |\,\mathbb{E}\left[V^\star\left(\hat s_+\right)\,|\, s,a\right] \,\right | < \infty$.
	\end{itemize} 
\end{Theorem}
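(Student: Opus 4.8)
The plan is to choose the stage cost $\hat\ell$ so that it exactly compensates for the discrepancy between the true and assumed dynamics once the true optimal value function $V_\star$ is used to bootstrap one step. Concretely, I would define
\[
\hat\ell(s,a) = \ell(s,a) + \gamma\,\mathbb{E}\left[V_\star(s_+)\,|\,s,a\right] - \gamma\,\mathbb{E}\left[V_\star(\hat s_+)\,|\,s,a\right],
\]
where $s_+$ follows the true dynamics $f(s,a)$ and $\hat s_+$ the assumed dynamics $\hat f(s,a)$. The motivation is the following key identity: stacking one step of the assumed dynamics on top of this cost gives
\[
\hat\ell(s,a) + \gamma\,\mathbb{E}\left[V_\star(\hat s_+)\,|\,s,a\right] = \ell(s,a) + \gamma\,\mathbb{E}\left[V_\star(s_+)\,|\,s,a\right] = Q_\star(s,a),
\]
where the last equality is precisely the Bellman equation~\eqref{eq:Bellman1}. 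In words, the model-mismatch terms telescope away, so a single Bellman backup under the wrong model reproduces $Q_\star$ exactly.

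First I would establish (i) by backward induction on the horizon, using the dynamic programming recursion
\[
\hat V^N(s) = \min_a\left[\hat\ell(s,a) + \gamma\,\mathbb{E}\left[\hat V^{N-1}(\hat s_+)\,|\,s,a\right]\right]
\]
with the terminal convention $\hat V^0 = V_\star$. The base case is immediate: inserting $\hat V^0 = V_\star$ and applying the key identity gives $\hat V^1(s) = \min_a Q_\star(s,a) = V_\star(s)$ by~\eqref{eq:Bellman2}. For the inductive step, assuming $\hat V^{N-1} = V_\star$ on $\mathcal S$ and substituting into the recursion, the key identity again yields $\hat V^N(s) = \min_a Q_\star(s,a) = V_\star(s)$, so (i) holds for every horizon $N$.

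Statements (ii) and (iii) then follow directly. The optimal policy is the minimizing argument of the last backup, so $\hat\pi^N(s) = \mathrm{arg}\min_a\,Q_\star(s,a) = \pi_\star(s)$, which is (ii). For (iii), I would expand the action-value function before the outer minimization,
\[
\hat Q^N(s,a) = \hat\ell(s,a) + \gamma\,\mathbb{E}\left[\hat V^{N-1}(\hat s_+)\,|\,s,a\right] = \hat\ell(s,a) + \gamma\,\mathbb{E}\left[V_\star(\hat s_+)\,|\,s,a\right] = Q_\star(s,a),
\]
again using (i) together with the key identity.

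I expect the main obstacle to be the well-posedness of these expressions rather than the algebra: the correction term $\hat\ell$ and each backup only make sense where the expectations $\mathbb{E}[V_\star(\hat s_+)\,|\,s,a]$ are finite. This is exactly why the theorem restricts attention to the set $\mathcal S$ on which $|\mathbb{E}[V_\star(\hat s_k^{\pi_\star})]| < \infty$, and why (iii) is further qualified to those inputs $a$ with $|\mathbb{E}[V_\star(\hat s_+)\,|\,s,a]| < \infty$. The delicate point in the induction is to verify that the $\pi_\star$-trajectories of the assumed model remain in $\mathcal S$, so that the induction hypothesis and the finiteness of the correction term remain available at every stage; the definition of $\mathcal S$ through the optimal-policy trajectories is what guarantees this closure.
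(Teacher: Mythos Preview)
The present paper does not supply a proof of this theorem; it is simply recalled from \cite{Gros2018}. Your construction---taking $\hat\ell(s,a) = Q_\star(s,a) - \gamma\,\mathbb{E}[V_\star(\hat s_+)\mid s,a]$ so that a single Bellman backup under $\hat f$ reproduces $Q_\star$, and then propagating this identity by backward induction on the horizon---is correct and is exactly the argument used in the cited reference.
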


We can now further clarify the theorem and formalize a form of ``orthogonality'' between reinforcement learning and system identification.
\begin{Corollary}
	\label{cor:rl_sysId}
	Let us split the parameter as $\theta=(\theta_f,\theta_Q)$, such that the model $f_\theta$ only depends on $\theta_f$ and
	assume a perfect parametrization, such that, under adequate exploration of the state-action space, RL learns $Q$ and $V$ perfectly with $\hat \theta \neq \theta$, i.e.,
	\begin{align*}
	\delta_k=\ell(s_k,u_k) + \gamma \E{}{V^N_{\hat\theta}(s_{k+1})} - Q^N_{\hat\theta}(s_k,a_k)=0.
	\end{align*}
	Then, leaving the task of identifying $\theta_f$ to a separate identification algorithm is not detrimental to the learning of the optimal value, action-value function, and policy.
\end{Corollary}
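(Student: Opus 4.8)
The plan is to obtain the corollary as a direct consequence of Theorem~\ref{thm:wrong_model}, exploiting the fact that the split $\theta=(\theta_f,\theta_Q)$ isolates the model from the remaining cost parameters and that the theorem permits the model to be fixed arbitrarily. First I would let the separate identification algorithm return some value $\bar\theta_f$, which defines a model $\hat f=f_{\bar\theta_f}$ that in general differs from the true model $f$. The goal is then to show that, with $\theta_f$ frozen at $\bar\theta_f$ and RL acting on $\theta_Q$ alone, the TD error can still be driven to zero, so that $V_\star$, $Q_\star$, and $\pi_\star$ are recovered exactly on $\mathcal{S}$.

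Next I would apply Theorem~\ref{thm:wrong_model} with $\hat f=f_{\bar\theta_f}$ and terminal cost $V_\star$: since the theorem holds for an arbitrary state transition model, it guarantees the existence of a stage cost $\hat\ell$ for which identities (i)--(iii) hold. The decisive step is to invoke the perfect-parametrization hypothesis to conclude that the required $\hat\ell$ and terminal cost $V_\star$ are representable by the cost ingredients $\ell_\theta$, $V^\mathrm{f}_\theta$, and $\lambda_\theta$ that depend only on $\theta_Q$. This yields a value $\bar\theta_Q$ such that $Q^N_{(\bar\theta_f,\bar\theta_Q)}=Q_\star$, $V^N_{(\bar\theta_f,\bar\theta_Q)}=V_\star$, and $\pi^N_{(\bar\theta_f,\bar\theta_Q)}=\pi_\star$ on $\mathcal{S}$.

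It then remains to verify that $\hat\theta=(\bar\theta_f,\bar\theta_Q)$ is a stationary point of the $Q$-learning update. Substituting $Q^N_{\hat\theta}=Q_\star$ and $V^N_{\hat\theta}=V_\star$ into the TD error and using the Bellman identity~\eqref{eq:Bellman1} gives
\begin{align*}
\delta_k = \ell(s_k,u_k) + \gamma\,\E{}{V_\star(s_{k+1})} - Q_\star(s_k,a_k) = 0
\end{align*}
for every sampled pair $(s_k,a_k)$, so the update~\eqref{eq:Qlearning:Update} leaves $\theta_Q$ unchanged and the optimal value, action-value function, and policy have been learned. Since $\bar\theta_f$ was the arbitrary output of the identification algorithm, freezing it does not obstruct this conclusion, which is exactly the claimed orthogonality.

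The step I expect to be the main obstacle is the passage from Theorem~\ref{thm:wrong_model} to the finitely parametrized MPC: the theorem only asserts existence of a compensating $\hat\ell$ among all admissible stage costs, whereas here the cost is restricted to the family indexed by $\theta_Q$. Bridging this gap is precisely what the perfect-parametrization assumption must supply, and I would therefore state it so that it holds uniformly over the range of $\theta_f$ reachable by the identification algorithm; otherwise a model $\bar\theta_f$ could in principle be selected for which no $\theta_Q$ represents $Q_\star$, and the separation would cease to be harmless.
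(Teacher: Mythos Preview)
Your proposal is correct and follows the approach implicit in the paper: the paper provides no explicit proof of Corollary~\ref{cor:rl_sysId}, presenting it as a direct consequence of Theorem~\ref{thm:wrong_model} under the perfect-parametrization hypothesis, and your argument spells out precisely that derivation. Your closing remark on what the perfect-parametrization assumption must actually supply (representability of the compensating $\hat\ell$ within the $\theta_Q$-family for the chosen $\bar\theta_f$) is a useful clarification that the paper leaves tacit.
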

%

Unfortunately, Corollary~\ref{cor:rl_sysId} applies only to the case of perfect parametrization. In practice, function approximation with imperfect parametrization and lack of excitation can destroy this form of orthogonality. Ongoing research is currently further investigating such aspects.

\subsection{On the Parametrization of $V$ and $Q$}
\label{sec:VQparametrisation}

Since the Bellman principle of optimality states that $Q_\theta^N(s,a)=\ell_\theta(s,a) + V^{N-1}_\theta(f_\theta(s,a))$, one could be tempted to 
replace $V_\theta^{N}(s)=\min_a Q_\theta^N(s,a)$ by $V_\theta^{N-1}(s)$
in the computation of the TD error. We discuss in the following lemma why this approach can minimize the TD error while not learning the correct action-value function and, consequently, not delivering the optimal policy. 

\begin{Lemma}
	Consider computing the TD error~\eqref{eq:TDError} by
	replacing $\min_a Q_\theta^N(s,a)$ by $V_\theta^{N-1}(s)$ to obtain
	\begin{align}
		\delta = \ell(s,a) + \gamma\,V_\theta^{N-1}(f(s,a)) - Q_\theta^N(s,a).
		\label{eq:wrong_td_error}
	\end{align}
	Then, it is possible to obtain $\delta=0$ without minimising the error $\|Q^\star(s,a)-Q_\theta^N(s,a)\|^2$ and, therefore, without learning the optimal policy.
\end{Lemma}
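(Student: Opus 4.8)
The plan is to exploit the exact Bellman identity that the parametrized MPC satisfies \emph{by construction}, namely $Q_\theta^N(s,a) = \ell_\theta(s,a) + \gamma V_\theta^{N-1}(f_\theta(s,a))$ (the discount factor, omitted in the informal statement at the beginning of this subsection, follows from the $\gamma^k$ weighting in \eqref{eq:param_nmpc:cost}). The crucial observation is that this identity is a \emph{tautology}: it holds for every value of $\theta$, irrespective of whether the model $f_\theta$, the terminal cost $V^\mathrm{f}_\theta$, or the arrival cost $\lambda_\theta$ are correct. Replacing the proper bootstrapping target $V_\theta^N(s_+) = \min_{a_+} Q_\theta^N(s_+,a_+)$, evaluated at the \emph{real} next state, by $V_\theta^{N-1}(f(s,a))$ therefore compares the MPC against its own internal cost-to-go decomposition rather than against the true optimal value, so it can carry no information about $Q_\star$ beyond the stage cost.

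First I would substitute the Bellman identity into \eqref{eq:wrong_td_error}. In the deterministic reading where the sampled transition satisfies $s_+=f(s,a)$ and the internal model coincides with it, $f_\theta=f$, the two discounted terms $\gamma V_\theta^{N-1}(f(s,a))$ cancel exactly, leaving
\[
\delta = \ell(s,a) - \ell_\theta(s,a).
\]
More generally, when $f_\theta\neq f$, the residual is $\delta = \ell(s,a)-\ell_\theta(s,a)+\gamma\bigl(V_\theta^{N-1}(f(s,a))-V_\theta^{N-1}(f_\theta(s,a))\bigr)$, which is a single scalar equation per state-action pair and can be annihilated by the pointwise freedom in $\ell_\theta$ alone. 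In either case $\delta=0$ is achieved merely by matching the stage cost and places no constraint whatsoever on the parameters that actually determine the closed-loop behaviour.

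The last step is to convert this into an explicit counterexample that certifies the ``without learning the optimal policy'' clause. I would fix $\ell_\theta=\ell$ together with some admissible model $f_\theta$, but choose the terminal ingredients --- e.g.\ the terminal cost $V^\mathrm{f}_\theta$ with a finite horizon $N$ --- so that the truncated value $V_\theta^{N}$ does \emph{not} coincide with the true infinite-horizon value $V_\star$. With this choice $\delta=0$ at every $(s,a)$, yet $Q_\theta^N\neq Q_\star$ and hence $\pi_\theta^N = \mathrm{arg}\min_a Q_\theta^N(s,a)\neq\pi_\star$, so a vanishing modified TD error does not imply minimization of $\|Q_\star(s,a)-Q_\theta^N(s,a)\|^2$. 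I expect the main obstacle to be constructing this counterexample in a way that is clearly consistent with the positive-definiteness assumptions stated in the Remark: one must check that a suboptimal terminal cost can be selected while keeping $\ell_\theta$ and $V^\mathrm{f}_\theta$ positive definite, so that the MPC remains well posed and the degeneracy is genuine rather than an artifact of an ill-defined problem.
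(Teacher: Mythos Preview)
Your argument is correct and rests on exactly the same mechanism as the paper's proof: once the modified TD error is rewritten via the tautological identity $Q_\theta^N(s,a)=\ell_\theta(s,a)+\gamma V_\theta^{N-1}(f_\theta(s,a))$, matching the stage cost (and model) forces $\delta=0$ while leaving the terminal-cost parameter completely unconstrained, so the policy is not pinned down. The paper makes this concrete with the $N=1$ LQR case $\theta=(\hat A,\hat B,\hat P)$, showing that with $\hat A=A$, $\hat B=B$ the TD error is identically zero for \emph{any} $\hat P$, whereas you argue the same point abstractly and defer the explicit counterexample; instantiating your plan in the LQR setting reproduces the paper's computation verbatim.
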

\begin{proof}
	We prove the Lemma by a simple counter example.
	Consider the LQR case, with $\theta=(\hat A, \hat B,\hat P)$ and
	\begin{align*}
	Q_\theta^1(s,a) &= \matr{c}{s \\ a}^\top\hspace{-2pt} \matr{ll}{T + \gamma \hat A^\top \hat P \hat A & S  + \gamma \hat A^\top \hat P \hat B \\ S^\top  + \gamma \hat B^\top \hat P \hat A \ \ & R + \gamma \hat B^\top \hat P \hat B} \matr{c}{s \\ a}, \\ 
	V_\theta^0(s) &= s^\top  \hat Ps. 
	\end{align*}
	Then,~\eqref{eq:wrong_td_error} reads as
	\begin{align*}
	\delta=\gamma \matr{c}{s \\ a}^\top \hspace{-4pt} \left (\matr{cc}{A & \hspace{-4pt}B}^\top \hspace{-3pt} \hat P\matr{cc}{A & \hspace{-4pt}B} - \matr{cc}{\hat A & \hspace{-4pt}\hat B}^\top\hspace{-3pt} \hat P \matr{cc}{\hat A & \hspace{-4pt}\hat B}\right ) \matr{c}{s \\ a}.
	\end{align*}
	Assume now that the correct model is identified, i.e., $\hat A=A$, $\hat B=B$. Then, the TD error is independent of $\hat P$, which can be chosen as desired. This implies that, even though we use a perfect parametrization of the action-value function, $Q_\theta^1(s,a) \neq Q_\star(s,a)$. Consequently, the learned feedback given by 
	\begin{align}
	\label{eq:lqr_feedback_rl}
	\hat K=(R+\gamma \hat B^\top \hat P \hat B )^{-1}(S^\top + \hat B^\top \hat P \hat A)
	\end{align}
	is not uniquely defined. 
\end{proof}

For comparison, we consider now the standard case, in which the TD error is computed using
\begin{align*}
V_\theta^1(s) &= \min_a Q_\theta^1(s,a) \\
&=s^\top (T + \gamma \hat A^\top \hat P \hat A - (S + \gamma\hat A^\top \hat P \hat B)\hat K)s, 
\end{align*}
where $\hat K$ is given by~\eqref{eq:lqr_feedback_rl}. 
In this case, if $\hat A=A$, $\hat B=B$, $\hat P$ must solve the algebraic Riccati equation associated with the given stage cost and model. It is important to stress, however, that infinitely many solutions exist in general such that the true model will not be identified. As an example we provide $\theta_1=(0.5\hat A, 0.5\hat B,4\hat P)$: it is immediate to verify that $Q_{\theta_1}^1(s,a) \equiv Q_\theta^1(s,a)$.

\subsection{Condensed Model-Free Parametrization}
Motivated by the previous considerations on the possibility of learning a wrong model, we propose next a formulation which is truly model-free as it does not include model parameters as coefficients to be learned. For simplicity we only focus here on the case of linear dynamics with a quadratic cost. We define $z = (x_0,u_0,\ldots,u_{N-1})$; then
\begin{subequations}
	\label{eq:param_nmpc:valuefunction2}
	\begin{align}
	Q_\theta^N(s,a) =  \min_{z}\  & z^\top \hspace{-4pt} M z  + m^\top z + c \\
	\mathrm{s.t.} \  & x_0 = s, \quad u_0=a, \\
	& C z \leq d, 
	\end{align}
\end{subequations}
where $\theta=(M,C,d,m)$ and $M$ is a symmetric (typically dense) matrix. Problem formulation~\eqref{eq:param_nmpc:valuefunction2} is sometimes used in MPC and optimal control, in which case the parameters $\theta$ are related to the system dynamics and cost~\cite{Bock1984}.
In this case, the dependence of $V$ and $Q$ on the parameters is less nonlinear than in the model-based parametrization, see~\eqref{eq:param_nmpc}. However, the introduction of a prediction horizon $N>1$ possibly introduces more parameters than the standard formulation. A special case occurs when there are only input constraints and they are known: in this case $C$ can be fixed and only $M$, $m$ need to be learned. 
The LQR case simplifies to
\begin{align}
Q_\theta^1(s,a) =  \matr{c}{s\\ a}^\top \matr{ll}{\bar T & \bar S\\ \bar S^\top & \bar R} \matr{c}{s\\ a},
\end{align}
such that $K=\bar R^{-1}\bar S^\top$ and $\pi^\star(s)=-Ks$


\section{Algorithm}
\label{sec:data_efficient}

The main practical difficulties related to using MPC as a function approximator in RL are that: (a) the function approximator is nonlinear in the parameters $\theta$, and (b) the MPC problem is guaranteed to have a meaningful solution only if the cost is positive-definite. In this section we propose an algorithm that addresses both issues.

We start by recalling that the main motivation for the stochastic gradient approach typically used in $Q$-learning stems from an equivalence with the table-lookup case, i.e., when the state-action space is discrete and the action-value function is parametrized as
\begin{align*}
	Q_\theta(s,a) = \sum_{\bar s\in \mathcal{S}, \bar a\in \mathcal{A}} \theta_{\bar s,\bar a} \phi_{\bar s,\bar a}(s,a),
\end{align*}
and $\phi_{\bar s,\bar a}(s,a)=1$ if $\bar s=s, \bar a=a$ and $0$ otherwise. In this case, $Q_\theta$ is linear in the parameter $\theta$. Parameter $\alpha$ in~\eqref{eq:Qlearning:Update} is used in order to approximate the expected value of the TD error: one can roughly interpret it as the inverse of the amount of samples over which the average is computed. 

In this case, the update~\eqref{eq:Qlearning:Update} obtained with the (exact) function approximation matches that of the enumeration
\begin{align*}
	\sum_{\bar s\in \mathcal{S}, \bar a\in \mathcal{A}} \theta_{\bar s,\bar a} \delta_{\bar s,\bar a}(s,a) = \alpha \delta.
\end{align*}
The update~\eqref{eq:Qlearning:Update} can also be written as $\theta \leftarrow \theta + \alpha \Delta \theta^*$ with
\begin{align*}
	\Delta \theta^* &= \arg\min_{\Delta \theta} \ (\delta_k - \nabla_\theta Q_\theta \Delta \theta)^2.
\end{align*}

We remark that this is not the case if $\phi$ is not normalized or if $Q$ is a nonlinear function of the parameter $\theta$. This also implies that parameter $\alpha$ loses its original meaning, as it is also used to dampen the stochastic gradient step in order to enforce convergence.

We propose to apply the update $\theta \leftarrow \theta + \alpha (\theta^*-\theta)$ where $\theta^*$ is the solution of
\begin{subequations}
	\label{eq:q_solution}
	\begin{align}
	\min_{\theta}  \ & \sum_{k=0}^n \left (\ell(s_k,a_k) + \gamma\,\min_{a^\prime} Q_{\tilde \theta}^N(s_{k+1},a^\prime)- Q_{\theta}^N(s_k,a_k) \right )^2
	\label{eq:q_solution_cost}\\
	\mathrm{s.t.} \ & \nabla^2 \ell_{ \theta} \succ 0, \qquad \nabla^2 V^\mathrm{f}_{ \theta} \succ 0. \label{eq:q_pd_constr}
	\end{align}
\end{subequations}
Note that we solve Problem~\eqref{eq:q_solution} to full convergence at each step. In case of a linear parametrization of $Q_\theta$, convergence is obtained in one step by using a Gauss-Newton Hessian approximation, in case Constraints~\eqref{eq:q_pd_constr} are inactive. This also directly solves the issue of scaling, present in the table-lookup case if $\phi$ is not normalized. Finally, if instead of solving Problem~\eqref{eq:q_solution} to full convergence constraints~\eqref{eq:q_pd_constr} are neglected and one takes a single full Newton step, the update~\eqref{eq:Qlearning:Update} is recovered.


We highlight next the main features of Formulation~\eqref{eq:q_solution}:
\begin{itemize}
	\item \textit{Globalization}: globalization strategies such as line search ensure descent and, therefore, convergence. Descent is typically not enforced in stochastic gradient approaches such as~\eqref{eq:Qlearning}, which could take a step which yields a larger TD error than with the previous parameter estimate.
	\item \textit{Positive-definiteness enforcement}: Constraints~\eqref{eq:q_pd_constr} guarantee that the cost is positive-definite and, therefore, let the MPC problem be stabilizing and well-posed. Without this constraint, even with a well-posed, positive-definite cost, throughout the RL iterates one might obtain an indefinite cost yielding an unbounded solution. 
	We have written the positive-definiteness Constraints~\eqref{eq:q_pd_constr} using the Hessian of $\ell$ and $V^\mathrm{f}$, which is reasonable if both functions are quadratic. However, other approaches relying on richer function approximations are possible, e.g. using sum-of-squares techniques. 
	\item \textit{Best fit}: by solving Problem~\eqref{eq:q_solution} to full convergence, the step $\theta^*-\theta$ is the one which minimizes $\delta$, analogously to the lookup-table case. Therefore, the choice of parameter $\alpha$ can be done purely by considerations about the expected value approximation.
\end{itemize}
In summary, the main differences with the standard update~\eqref{eq:Qlearning:Update} are: (a) positive-definiteness enforcement, (b) insensitivity to parameter scaling and (c) guarantee of improvement at each step through globalization and full convergence.

\subsection{Derivative Computation}
We detail next how to compute the derivatives of the action-value function with respect to the parameters. 
To this end, we define the Lagrangian function underlying NMPC problem~\eqref{eq:param_nmpc} as
\begin{subequations}
	\begin{align*}
	\mathcal{L}_\theta^N(s,y) = \ & \gamma^N V^\mathrm{f}_\theta(x_N)  + \chi_0^\top\left(x_0 - s\right) + \mu_N^\top h^\mathrm{f}_\theta(x_N)\\ 
	& + \sum_{k=0}^{N-1}  \chi_{k+1}^\top\left(f_\theta\left(x_k,u_k\right) - x_{k+1} \right) + \nu_k^\top g_\theta\left(u_k\right)\\
	&+\gamma^k \ell_\theta(x_k,u_k) + \mu_{k}^\top h_\theta\left(x_k,u_k\right) + \zeta^\top(u_0-a),
	\end{align*}
\end{subequations}
where $\chi,\mu,\nu,\zeta$ are the multipliers associated with constraints \eqref{eq:param_nmpc:initial}-\eqref{eq:param_nmpc:path} and $y=(z,\chi,\mu,\nu,\zeta)$. Note that, for $\zeta=0$, $\mathcal{L}_\theta^N(s,y)$ is the Lagrangian function associated to the NMPC problem defining the value function $\min_a Q_\theta(s,a)$.
We observe that \cite{Buskens2001}
\begin{align}
\label{eq:NMPCQgradient}
\nabla_\theta Q_\theta^N(s,a) = \nabla_\theta \mathcal{L}_\theta^N(s,y^\star)
\end{align}
holds for $y^\star$ given by the primal-dual solution of \eqref{eq:param_nmpc}. Note that this equality holds because constraints \eqref{eq:param_nmpc:initial} are not an explicit function of $\theta$. The gradient \eqref{eq:NMPCQgradient} is therefore straightforward to build as a by-product of solving the NMPC problem \eqref{eq:param_nmpc}. We additionally observe that
\begin{align}
\label{eq:NMPCVgradient}
\nabla_\theta \min_a Q_\theta^N(s,a) = \nabla_\theta \mathcal{L}^N(s,y^\diamond),
\end{align}
where $y^\diamond$ is given by the primal-dual solution to \eqref{eq:param_nmpc} with constraint $u_0=a$ removed and $\zeta^\diamond=0$.

We remark that second-order derivatives can also be computed, though in general they depend on the derivative of the optimal primal-dual solution with respect to the parameters, 
given by
\begin{align}
\label{eq:dpolicy_dtheta}
\nabla_{\theta} y^\star = - \nabla_{\theta } \xi_\theta^N(s,y^\star)\nabla_{y} \xi_\theta^N(s,y^\star)^{-1},
\end{align}
where $\xi_\theta^N(s,y)$ gathers the primal-dual KKT conditions underlying the NMPC scheme \eqref{eq:param_nmpc}. For a complete discussion on parametric sensitivity analysis of NLPs we refer to~\cite{Buskens2001} and references therein.

%

%

\section{Numerical Example}
\label{sec:examples}

We consider an example from the process industry, i.e. the evaporation process modelled in~\cite{Wang1994,Sonntag2006} and used in~\cite{Amrit2013a,Zanon2016b} to demonstrate the potential of economic MPC in the nominal case. 
For the sake of brevity we omit the model equations and non-quadratic economic cost function, which include states $(X_2,P_2)$ (concentration and pressure) and controls $(P_{100},F_{200})$ (pressure and flow). All details can be found in~\cite{Amrit2013a,Zanon2016b}.
The model further depends on concentration $X_1$, flow $F_2$, and temperatures $T_1,T_{200}$, which are assumed to be constant in the control model. In reality, these quantities are stochastic with variance $\sigma_{X_1}=1$, $\sigma_{F_1}=2$, $\sigma_{T_1}=8$, $\sigma_{T_{200}}=5$,  and mean centred on the nominal value. Bounds $(25,40)\leq (X_2,P_2) \leq (100,80)$ on the states and $100 \leq (P_{100},F_{200}) \leq 400$ on the controls are present. In particular, the bound $X_2\geq 25$ is introduced in order to ensure sufficient quality of the product. All state bounds are relaxed as in~\eqref{eq:param_nmpc:path}.

We parametrize an NMPC controller as in~\eqref{eq:param_nmpc}, i.e., a nonlinear non-condensed MPC formulation, 
with $N=10$, $\lambda_\theta, V^\mathrm{f}_\theta, l_\theta$ quadratic functions defined by Hessian $H_\dagger$, gradient $h_\dagger$, and constant $c_\dagger$, $\dagger=\{\lambda, V^\mathrm{f}, l\}$. The model is parametrized as the nominal model with the addition of a constant, i.e., $f_\theta(x,u) = f(x,u) + c_f$. The control constraints are fixed and the state constraints are parametrized as simple bounds, i.e., $h_\theta(x,u)=(x-x_\mathrm{l},x_\mathrm{u}-x)$. The vector of parameter therefore reads as:
\begin{align*}
\theta = ( H_\lambda, h_\lambda, c_\lambda, H_{V^\mathrm{f}}, h_{V^\mathrm{f}}, c_{V^\mathrm{f}}, H_l, h_l, c_l, c_f, x_\mathrm{l},x_\mathrm{u}).
\end{align*}
Constants $w_s=1$, $W_s=I$ are fixed and assumed to reflect the known cost of violating the state constraints.

We use the batch policy update with $\alpha=10^{-2}$ and update the parameters with the learned ones every $N_\mathrm{upd}=500$ time steps. In order to induce enough exploration, we use an $\epsilon$-greedy policy which is greedy $90\, \%$ of the samples, while in the remaining $10 \, \%$ we apply the action
\begin{align*}
a = \mathrm{sat}(e,u_\mathrm{l},u_\mathrm{u}), && e \sim \mathcal{N}(0,\sqrt{10}),
\end{align*}
where $\mathrm{sat}(\cdot,u_\mathrm{l},u_\mathrm{u})$ saturates the input between its lower and upper bounds $u_\mathrm{l},u_\mathrm{u}$, respectively.

\begin{figure}
	\begin{center}
		\includegraphics[width=\linewidth,clip,trim=0 145 0 120]{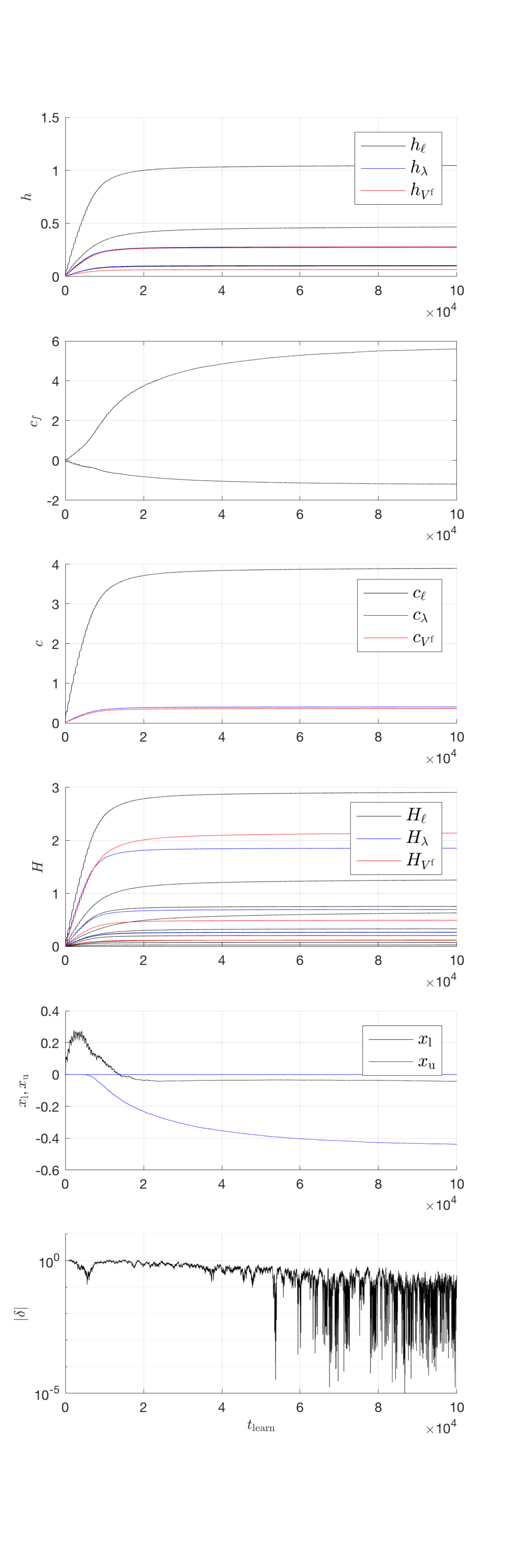}
	\end{center}
	\caption{Evolution of the parameters (increment w.r.t. the initial guess value) and of the TD error (averaged over the preceding $1000$ samples).}
	\label{fig:nmpc_learn}
	\vspace{-1em}
\end{figure}

\begin{figure}
	\begin{center}
		\includegraphics[width=0.9\linewidth,clip,trim=20 40 30 30]{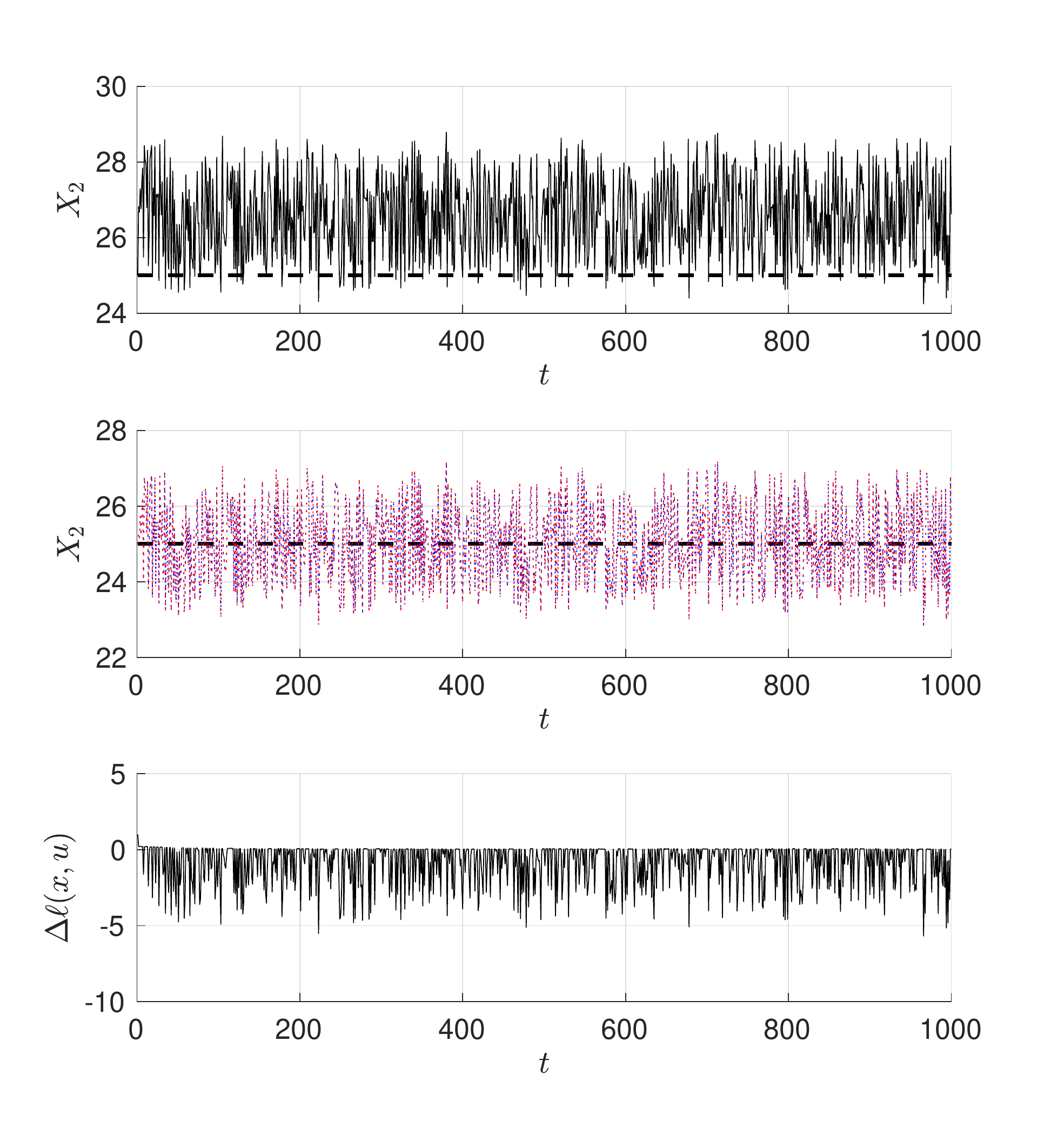}
	\end{center}
	\caption{NMPC closed-loop simulations. Top graph: concentration $X_2$ for NMPC with RL tuning. Middle graph: naive tuning (red) and nominal economic tuning (blue). In both graphs, the quality constraint is displayed in thick dashed black line. Bottom graph: difference of instantaneous cost between NMPC with RL tuning and naive tuning.}
	\label{fig:nmpc_x_l}
	\vspace{-1em}
\end{figure}

We initialize the ENMPC scheme by the 
naive initial guess $H_l=I$, $x_\mathrm{l}=(25,40)$, $x_\mathrm{u}=(100,80)$, while all other parameters are $0$. 
While at every step we do check that $H_l, H_{V^\mathrm{f}}\succ 0$, during the learning phase the parameters never violate this constraint. 
As displayed in Figure~\ref{fig:nmpc_learn}, the algorithm converges to a constant parameter value while reducing the average TD-error. If the standard parameter update~\eqref{eq:Qlearning:Update} is applied, RL does not converge with the proposed $\alpha$. If a smaller value is used, the algorithm does not diverge but the parameters $\theta$ are updated very slowly, making the approach impractical.

We performed a simulation to compare the RL-tuned NMPC scheme to the naive initial guess and the NMPC tuned using the economic-based approach proposed in~\cite{Zanon2016b}, which relies on the nominal model. The economic gain obtained by RL is approximately $14\,\%$ and $12\,\%$ respectively. The effectiveness of the economically-tuned NMPC had been demonstrated in~\cite{Zanon2016b} in the absence of stochastic perturbations. In the considered scenario we observe that, while the economically-tuned scheme still performs better than the naively-tuned one, the RL tuning is able to significantly outperform the two other NMPC schemes by explicitly accounting for the stochastic perturbations.

The concentration $X_2$ and the difference in instantaneous cost between the RL-tuned and the naively-tuned scheme are displayed in Figure~\ref{fig:nmpc_x_l}. In particular, one can see that RL is trying to stabilize the concentration $X_2$ to a value which is higher than the nominally optimal one: the optimum in the presence of perturbations is obtained as a compromise between the loss due to operating at $X_2>25$ and the cost of violating the constraint $X_2\geq 25$. Indeed, the constraint is violated but only rarely and by small amounts.

\section{Conclusions and Outlook}
\label{sec:conclusions}

In this paper we analyzed the use of reinforcement learning to tune MPC schemes, aiming at a self-tuning controller which guarantees stability, safety (i.e., constraint satisfaction), and optimality. In order to be able to apply this approach in practice, we have proposed an improved algorithm based on $Q$-learning and we have tested it in simulations.

Future work will consider several research directions including: (a) further improvements in the algorithmic framework with the aim of developing data-efficient algorithms; (b) developing new algorithms for other RL paradigms, such as policy gradient methods; (c) further investigating the combination of system identification and RL in order to best update the MPC parameters while guaranteeing safety.

\bibliographystyle{IEEEtran}
\bibliography{my_bib}

\end{document}